\renewcommand{\check}[1]{{\bf #1}}
\title{Counting in One-Hop Beeping Networks}
\author{A. Casteigts, Y. M\'etivier,
  J.M. Robson and A. Zemmari} \institute{Universit\'e de Bordeaux - 
Bordeaux INP\\
  LaBRI UMR CNRS 5800\\ 351 cours de la Lib\'eration, 33405 Talence,
  France\\ \{acasteig, metivier, robson, zemmari\}@labri.fr } \date{ }
\begin{document}
\maketitle
\newtheorem{fact}{Fact}[section]
\begin{abstract}

We consider networks of processes which interact with beeps. In the basic model defined by Cornejo and Kuhn~\cite{Cornejo10}, which we refer to as the $BL$ variant, processes can choose in each round either to beep or to listen. Those who beep are unable to detect simultaneous beeps. Those who listen can only distinguish between silence and the presence of at least one beep. Beeping models are weak in essence and even simple tasks may become difficult or unfeasible with them.

In this paper, we address the problem of computing how many participants there are in a one-hop network: 
the {\em counting} problem. We first observe that no algorithm can compute this number with certainty in $BL$, whether the algorithm be deterministic or even randomised (Las Vegas). We thus consider the stronger variant where beeping nodes are able to detect simultaneous beeps, referred to as $B_{cd}L$ (for {\em collision detection}). We prove that at least $n$  rounds are necessary
 in $B_{cd}L$, and we present an algorithm whose running time is $O(n)$ rounds with high probability. Further experimental results show that its expected running time is less than $10n$. Finally, we discuss how this algorithm can be adapted in other beeping models. In particular, we show that it can be emulated in $BL$, at the cost of a logarithmic slowdown and of trading its Las Vegas nature (result certain, time uncertain) against Monte Carlo (time certain, result uncertain). 
\end{abstract}
{\bf keywords:} Beeping model, Size computation, Counting problem, Las Vegas algorithms.

\section{Introduction}

Distributed algorithms are concerned with assumptions relating to various aspects like the structure of the network (e.g. trees, rings, planar graphs, complete graphs) or knowledge available to the nodes (e.g. a bound on the network size, unique identifiers, or port numbering). Another important aspect is the size of messages, which may range from unbounded (the ${\cal LOCAL}$ model) to logarithmic size (${\cal CONGEST}$ model), to constant size (e.g. simple bits)~\cite{Peleg}.

A natural approach in distributed computing is to reduce the assumptions as much as possible, in order to make positive results more general. Hence, when a problem is solved in some strong model, one naturally strives to solve it in a weaker model. In a recent series of works~\cite{Cornejo10,Schneider10,Afek13,HuangM13,Scott13,GN15}, the community has started to explore new models of communications that are even weaker than constant size messages in anonymous networks, namely {\em beeping models}.

In these models, the only communication capabilities offered to the nodes are to {\em beep} or to {\em listen} for the beeps of others. Several variants exist. In~\cite{Cornejo10}, a node that beeps is unable to detect whether other nodes have beeped at the same time. When listening, it can distinguish between silence or the presence of at least one beep, but it cannot distinguish between one and several beeps. In Section~6 of~\cite{Afek13}, a stronger variant is considered where beeping nodes can detect whether other nodes are beeping simultaneously (sender side collision detection). In~\cite{Schneider10} and Section~4 of~\cite{Afek13}, yet another variant is considered where the nodes can tell the difference between silence, one beep, and several beeps.
 The ability to detect internal collision is denoted by $B_{cd}$ ($B$ otherwise) and that of detecting peripheral collisions is denoted by $L_{cd}$ ($L$ otherwise).
The various models in literature can be reformulated in these terms. The basic model introduced by Cornejo and Kuhn in~\cite{Cornejo10} is $BL$; the model considered by Afek et al. in~\cite{Afek13} (Section~6) and Scott et al. in~\cite{Scott13} is $B_{cd}L$; and the model considered in~\cite{Schneider10} and in Section~4 of~\cite{Afek13} is $BL_{cd}$. To the best of our knowledge, $B_{cd}L_{cd}$ has only been considered in~\cite{CMRZ16}.

Studying weak models of computation is interesting in its own right. In addition, beeping models turn out to be relevant to model real-world applications or phenomena. For instance, they reflect the features of a network at the lowest level (physical layer), where a node can probe or emit signals, with or without collision detection. At a higher level of abstraction, beeping models also reflect some communication patterns in biology, such as {\em lateral inhibition} among neighboring cells~\cite{Collier96}.

\paragraph{Contributions.}
In this paper, we consider the counting problem in which nodes must determine
the size of the network.
We focus on the case that the communication graph is complete ({\em one-hop} networks), implying that each node can hear and be heard by all the others. 
As it turns out, even this version of the problem cannot be solved deterministically, due to the inherent lack of symmetry-breaking mechanisms in the beeping model. We thus consider randomised algorithms and start by providing a {\it Las Vegas} (LV) algorithm -- correct result but uncertain time, though finite with probability $1$ -- for solving the 
counting problem 
 in $B_{cd}L$. The expected running time of this algorithm is linear in the number of nodes (with high probability). On the negative side, we show that $n$
is a lower bound, which makes the algorithm optimal up to a constant factor (experimentally estimated to less than $10$). Unfortunately, we also observe that no LV algorithm exists in the weaker $BL$ model, leading us to consider a {\it Monte Carlo} (MC) variant in $BL$ where the result is correct only with some threshold probability. This algorithm relies on a technique developed in~\cite{CMRZ16}, which enables the emulation of $B_{cd}$ beeps with $B$ beeps, at the cost of a logarithmic slowdown. We also discuss how to adapt the algorithm in the stronger $B_{cd}L_{cd}$ model, this time with a (constant factor) gain in time complexity.

\paragraph{Related Work:}
As explained by Chlebus \cite{C01},
detecting  a collision in a radio network is to be able to distinguish
between $0$ messages and at least $2$ messages while detecting a collision
in the beeping model is to be able to distinguish between $1$ message
and at least $2$ messages. Thus results on the counting problem  in the 
context of radio networks  cannot be applied directly in the context of 
beeping models. Another problem which has been well-studied, in this context,
is the $k$-Selection problem (also known as all-broadcast); as explained 
by Anta and Mosteiro \cite{AM10} (see also \cite{K05,BKKK16}): 
it is solved ``when an unknown size-$k$
subset of $n$ network nodes have been able to access a unique shared channel
 of communication, each of them at least once''. Finally,
selection problems, in general,
differ from the counting problem studied in this paper 
\begin{enumerate}
\item
by the fact that
the collision detection does  not have the same meaning in the context of radio 
networks and in the context of the beeping models, and 
\item by the fact that in our case we look for the exact number of nodes
and the $k$-selection problem asks that nodes access the unique shared channel
at least once.
\end{enumerate}

Afek et al. \cite{Afek13}, 
 from considerations concerning the development of
certain cells, studied the MIS problem in the discrete beeping model $BL$
as presented in \cite{Cornejo10}. They consider, in particular,
the wake-on-beep model (sleeping nodes wake up upon receiving a beep) and
sender-side collision detection $B_{cd}L$: they give a $O(\log^2 n)$ rounds 
MIS algorithm. After this work, 
Scott et al. \cite{Scott13} present in the model $B_{cd}L$ 
a randomised algorithm
with feedback mechanism whose expected time to compute a MIS is
$O(\log n)$. 

More generally, Navlakha and Bar-Joseph present in \cite{NB15}
a general survey on 
similarities and differences between distributed computations in biological
and computational systems and, in this framework, the importance
of the beeping model.

In \cite{Cornejo10}, Cornejo and Kuhn study the interval colouring problem:
an interval colouring assigns to each node an interval (contiguous fraction)
of resources such that neighbouring nodes do not share resources
(it is a variant of graph colouring). They assume that each node
knows its degree and an upper bound of the maximum degree $\Delta$
of the graph. They present in the  beeping
model $BL$ a probabilistic  algorithm which never stops
and stabilises with a correct  $O(\Delta)$-interval coloring  in
$O(\log n)$ periods with high probability, where:  $n$ is 
the size of the graph, $\Delta$ its maximum degree and a period is $Q$ time
slots with $Q\geq \Delta$, thus it stabilises in $O(Q \log n)$ slots.

Emek and Wattenhofer introduce in \cite{EmekW13}
a model for distributed computations which resembles the beeping model:
networked finite state machines (nFSM for short). This model enables the sending
of the same message to all neighbours of a node; however it is asynchronous,
the states of nodes belong to a finite set, the degree of nodes is bounded
and the set of messages is also finite. In the nFSM model they give a
$2$-MIS algorithm for graphs of size $n$ using a set of messages of size $3$
with a time complexity equal to $O({\log n}^2).$ \check

Concerning the counting problem, in the context of the classical message passing
model, counting Monte Carlo 
algorithms for anonymous rings are presented in \cite{IR90,MRZTCS15}. The  time complexity is $O(n)$.
It is also investigated in dynamic networks \cite{KLO10,LBCB13,MCS13}. In \cite{KLO10}
nodes have unique identifiers, the size of messages is $O(\log n)$. Nodes
know the size of the graph  in $O(n^2)$ rounds. Networks are dynamic and anonymous
in \cite{LBCB13}; there is a leader in the network and the termination of the
counting algorithm is detected by a heuristic. In \cite{MCS13} communication is asynchronous
message passing, nodes have no identifiers and there exists a leader; counting algorithm
which is presented produces a correct stabilizing solution that do not guarantee termination.
\paragraph{Summary.}
The paper is organised as follows. In Section~\ref{sec:definitions} 
we provide  definitions and basic observations regarding the beeping model. Section~\ref{sec:one_hop} presents our main results on the counting problem, including the lower bound, our optimal algorithm, and its time complexity analysis. Finally, Section~\ref{sec:adaptation} discusses the adaptations of our algorithm in other variants of the beeping model, including its emulation as a Monte Carlo algorithm in the weakest $BL$ variant. 

\section{Network Model and Definitions}
\label{sec:definitions}

We consider a wireless network model and
we follow definitions given in~\cite{Afek13} and~\cite{Cornejo10}.
The network is anonymous: unique identifiers are not available to
distinguish the processes. 
Communications are
synchronous and encoded by a graph $G=(V,E)$ where the nodes
$V$ represent processes and the edges $E$ represent pairs of processes
that can hear each other. Since we focus here on one-hop networks, $G$ is a complete graph.
Time is divided into discrete synchronised time intervals called {\em slots} (following the usual terminology in wireless networks). 
All processes wake up and start computation at the same slot.
In each slot, all processors act in parallel and either beep or listen. In addition, processors can perform an unrestricted amount of local computation in-between two slots (in effect, our algorithms require little computation).

In this paper, we consider several variants of beeping models:
\begin{itemize}
\item if a process beeps, there are two cases:
\begin{enumerate}
\item it cannot know whether another process beeps simultaneously, 
this case  is denoted by $B$;
\item it can distinguish whether it beeped alone or if at least one neighbour
beeped concurrently, it is an internal collision; 
this case is called sender side collision detection,
and it is denoted in this paper $B_{cd}$;
\end{enumerate}
\item if a process listens, there are also two cases:
\begin{enumerate}
\item
it can distinguish between silence or the presence of at least one beep, this
model is denoted $L$;
\item
  it can distinguish between silence or the presence of one beep or
   the presence of at least two beeps; in this case it is a peripheral
collision, this model
is denoted $L_{cd}$  in this paper.
\end{enumerate}
\end{itemize}

Finally, a beeping model is defined by choosing between $B$ or $B_{cd}$ and
between $L$ and $L_{cd}$. 
\begin{remark}\label{listen}
In general, nodes are active or passive. When they are active
they beep or listen; in the description of algorithms we say explicitely when 
a node beeps meaning that a non beeping active node listens.
\end{remark}

The time complexity, also called {\em slot complexity}, is the maximum number of slots needed until every node has completed its computation. Our algorithms are typically structured into {\em phases}, each of which corresponds to a small (constant or logarithmic) number of slots. In the algorithm, we specify which one is the current slot by means of a {\tt switch} instruction with as many {\tt case} statements as there are slots in the phase. Phases repeat until some condition holds for termination.

\begin{remark}
An algorithm given in a beeping model induces an algorithm in the (synchronous)
message passing model. Thus, given a problem,
any lower bound on the round complexity
in the message passing model also holds for slot complexity in the beeping model.
\end{remark}

In this paper, results on graphs
having $n$ nodes are expressed with high probability (w.h.p. for short),
meaning with probability $1-o(n^{-1})$.
We write $\log n$ for the binary logarithm of $n$.

\paragraph{Distributed Randomised Algorithm:}
A randomised (or probabilistic) algorithm is an algorithm which makes choices
 based on given probability distributions. 
A {\em distributed} randomised algorithm is a collection of local randomised
algorithms (in our case, all identical). 

A {\em Las Vegas} algorithm  is a randomised algorithm which terminates
with probability one, and always produces
 a correct result.
A {\em Monte Carlo} algorithm is a randomised algorithm which terminates deterministically, but whose result may be incorrect with a certain
probability. Intuitively, Las Vegas algorithms have uncertain execution time but certain result, while Monte Carlo does the reverse.
Classical considerations on symmetry breaking in anonymous 
beeping networks (see for instance Lemma~4.1 in~\cite{Afek13}), imply that:
\begin{remark}
There is no Las Vegas (and a fortiori no deterministic) algorithm in $BL$ which allows a node
to distinguish between an execution where it is isolated and one where it has exactly one neighbour.
\end{remark}

From this remark we deduce that there is no Las Vegas counting algorithm
in $BL$, which advocates the use of stronger models. In what follows, we consider the $B_{cd}L$ model, where a beeping node can detect simultaneous beeps by others. We give a Las Vegas algorithm in this model, which is then turned into a Monte Carlo algorithm in $BL$ using emulation techniques.

\paragraph{The counting problem:}
We say that an algorithm solves the counting problem if
and only if 
 by the end of its execution, {\em every} node knows the size of the network.

\section{One-Hop Network Size Computation in $B_{cd}L$}

In the $B_{cd}L$ model, a node that beeps can detect simultaneous beeps by others, while a node that listens cannot distinguish between one and several beeps. Hence, only beeping nodes can detect collisions, which proves sufficient to solve the counting problem. In this section, we first prove that at least $n$ 
slots are 
required to learn the number of nodes $n$; 
then we present a Las Vegas algorithm which takes $O(n)$ slots 
to terminate (w.h.p.). Simulation results refine this complexity 
to less than $10n$ slots ({\it i.e.} less than $3.32$ phases of $3$ slots each).

\label{sec:one_hop}
\subsection{A Lower Bound }
The following lemma establishes that $n$ slots are required for computing $n$ in a complete graph.

\begin{lemma}\label{lowerbound}
Any Las Vegas beeping algorithm that counts the number of nodes in a 
complete graph $K_n$ needs $n$ slots.
\end{lemma}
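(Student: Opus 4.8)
The plan is to argue by a standard indistinguishability/adversary argument, constructing an execution on $K_{n}$ that looks, for the first $n-1$ slots, exactly like some execution on $K_{n-1}$ from the point of view of at least one node, so that if the algorithm terminated before slot $n$ it would necessarily output the wrong value on one of the two graphs with positive probability, contradicting the Las Vegas (always-correct) guarantee. Concretely, I would fix a Las Vegas counting algorithm $A$ and suppose for contradiction that on $K_{n}$ it halts (at every node) in at most $n-1$ slots with probability $1$.

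First I would recall that, since the network is anonymous and all nodes run the same local randomised algorithm, a node's behaviour in a given slot depends only on its own sequence of coin flips and the sequence of channel observations it has made so far (in $B_{cd}L$: whether it heard silence / a beep while listening, or heard "alone" / "collision" while beeping). The key combinatorial observation is that in a complete graph, the channel feedback received by a node in a slot is determined by how many \emph{other} nodes beep in that slot: a listener distinguishes $0$ from ${\ge}1$ other beepers, and a beeper distinguishes $0$ from ${\ge}1$ other beepers. So if in some execution on $K_{n}$ there is a slot in which exactly one node beeps, then all the \emph{other} $n-1$ nodes receive precisely the same feedback they would receive in an execution on $K_{n-1}$ in which that node is simply absent and no one beeps in that slot.

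Then I would drive the argument by induction on the slot number, maintaining a coupling. The plan: couple an execution $\alpha$ on $K_{n}$ with an execution $\beta$ on $K_{n-1}$ so that at each slot $t \le n-1$, $n-1$ designated nodes of $K_{n}$ have identical histories (same coins, same feedback) to the $n-1$ nodes of $K_{n-1}$, by using the adversary's remaining freedom — the coin tosses of the "extra" node — to ensure that whenever a potential discrepancy in feedback would arise (i.e. whenever at most one of the $n-1$ designated nodes beeps in slot $t$), the extra node's beep pattern is chosen to mask it (beep when exactly one designated node would otherwise beep alone; stay silent otherwise). This is possible because the event being controlled occurs with positive probability on the extra node's own independent randomness; more carefully, I would condition on the designated nodes' coin sequences and choose the extra node's coins from the conditional distribution that realises the masking, showing this conditional event has positive probability for every $t$, so after $n-1$ slots a positive-probability set of executions has the $n-1$ designated nodes deciding a value that is correct for $\beta$ (namely $n-1$) — hence \emph{incorrect} for $\alpha$ (which has $n$ nodes). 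Actually, the cleanest packaging is to run the induction the other direction: start from $K_{n-1}$ and show that for $t$ up to $n-1$ one can "graft on" a single extra node keeping everyone's view unchanged for $t-1$ slots, so that any terminating-by-slot-$(n-1)$ behaviour certifying size $n-1$ is also a positive-probability behaviour on $K_{n}$.

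The main obstacle I anticipate is handling the \emph{probabilistic} coupling rigorously rather than just the "worst-case execution" intuition: because $A$ is randomised, one cannot simply pick a single bad execution — one must show the masking strategy for the extra node succeeds with positive probability \emph{against the algorithm's own randomness}, and that this positive-probability set of executions survives all $n-1$ slots (the probabilities multiply but stay strictly positive since each slot imposes only finitely or countably many constraints, each of positive conditional probability, on the extra node's coins). The subtle point to get right is that "the extra node can always stay silent unless exactly one designated node beeps" must be checked to be consistent with the extra node \emph{also} running algorithm $A$ on some history — but since in every slot prior to any unmasked beep the extra node has received exactly the feedback "silence/alone" (we are masking precisely to keep it so), its own prescribed behaviour is well-defined, and we only need the event that $A$, fed this all-silent history, happens to choose to beep exactly in the slots we need, which again has positive probability. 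Once that bookkeeping is in place, the contradiction with correctness of the Las Vegas algorithm is immediate, giving the lower bound of $n$ slots.
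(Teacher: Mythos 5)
Your overall strategy---add a phantom node whose presence is invisible to the others, so that an execution counting $n-1$ (resp.\ $n$) survives on the larger graph---is indeed the paper's strategy. But two of your concrete steps are wrong, and the one idea that actually produces the bound $n$ is missing. First, your ``key combinatorial observation'' is false: if exactly one node beeps in $K_n$, the other $n-1$ nodes \emph{hear a beep}, whereas in $K_{n-1}$ with that node deleted and nobody beeping they hear silence; deleting a lone beeper is detectable. Second, your masking rule is backwards: having the extra node beep precisely when exactly one designated node beeps alone turns that node's feedback from ``alone'' into ``collision'', i.e.\ it \emph{creates} the discrepancy it was supposed to hide. The correct constraint is the opposite one---the extra node may beep only in slots where at least two designated nodes beep---and the clean way to guarantee both this and consistency with the algorithm is the paper's device: give the extra node the same coins as some designated node $v$, so it clones $v$'s entire history, and the coupling is exact provided $v$ \emph{never beeps alone}.

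The missing idea is the pigeonhole step that makes $v$ exist and is the sole source of the bound $n$: in any slot at most one node can beep alone, so in an execution of fewer than $n$ slots on $K_n$ at least one of the $n$ nodes never beeps alone; cloning that node yields an indistinguishable execution on $K_{n+1}$ in which everyone outputs $n$, contradicting correctness. Your substitute for this step---``each slot imposes a constraint of positive conditional probability on the extra node's coins, so the probabilities multiply and stay positive''---cannot be right, because it nowhere uses the number of slots: if it worked it would rule out Las Vegas counting in \emph{any} finite time, contradicting the algorithm the paper then gives. (Indeed the conditional probability can be $0$: the extra node runs the algorithm, and nothing forces the algorithm to offer, with positive probability, exactly the silence/beep pattern your mask demands.) Finally, a smaller point: you assume for contradiction that the algorithm halts within $n-1$ slots \emph{with probability $1$}, which would only show that the bound holds with positive probability; the paper's argument starts from a single positive-probability execution that finishes early and refutes it, giving the stronger conclusion that no correct execution can terminate in fewer than $n$ slots.
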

\begin{proof}[By contradiction]
Let $\cal A$ be such an algorithm and let $\cal E_{A}$ be an execution of $\cal A$ that terminates in less than $n$ slots in the complete graph $K_n$. Then it holds that at least one node, say $v$, never beeped alone. Let $\cal E_{A}'$ be another execution of $\cal A$, but this time in the complete graph $K_{n+1}$ composed of the same nodes as before, plus $v'$. Let all the nodes behave as they did over $\cal E_{A}$ and let $v'$ act exactly like $v$. Since $v$ never beeped alone in $\cal E_{A}$, it does  the same in $\cal E_A'$ and so does $v'$ as well, making both executions indistinguishable (two beeps are indistinguishable from three). Hence, the nodes in $\cal E_A'$ terminate as in $\cal E_A$, having counted $n$ instead of $n+1$, which is a contradiction.
\qed
\end{proof}
\begin{remark}
The bound holds even when collision detection is also available for listening nodes (thus, in $B_{cd}L_{cd}$ and a fortiori in $BL_{cd}$), due to the same argument that two beeps are indistinguishable from three on the listener side.
\end{remark}
\subsection{An Algorithm for 
Computing One-Hop Network Size  in  $B_{cd}L$}
We now propose a beeping algorithm that solves the counting problem in one-hop networks in $B_{cd}L$. Then we characterise its slot complexity both analytically and experimentally.

\subsubsection{Informal description:} The basic idea is to have nodes beep alone as often as possible, so that they can detect it and become counted. This is achieved by means of three slots: In slot $1$, the non-counted nodes beep with some probability. If they beeped alone, they re-beep in slot $2$ to inform the other nodes and they can be counted. In slot $3$, all non-counted nodes beep. If no one beeped in the third slot, then it means that every node is counted and the algorithm can terminate globally. The key ingredient is to use an adaptive probability $1/k$ to increase the chances to beep alone in the first slot, which is done by increasing the probability (decreasing $k$) whenever no one beeped, and decrease it (increasing $k$) whenever a collision is detected. As shown in the analysis section further below, this technique leads to a linear running time in the number of nodes.

The details are given on Algorithm~\ref{algo:size}. It describes the phase that repeats until termination is detected. Each {\tt case} statement corresponds to a slot, and an extra {\tt case} statement groups together various computation to be performed at the end of the phase: updating the count variable (line~\ref{line:count}), detecting termination (line~\ref{line:termination}), adjusting the probability (lines~\ref{line:proba1} to~\ref{line:proba2}).

\begin{algorithm}[h]
\label{algo:size}
\caption{Computing the size of a one-hop network in $B_{cd}L$.}  
$Boolean\ counted \gets false$ \\
$Boolean\ terminated \gets false$ \\
$Integer\ k \gets 2$ \\
$Integer\ size \gets 1$\\ 
\SetKwRepeat{Repeat}{repeat}{until}%
\Repeat{terminated = true}{
  \Switch{slot}{ 
    \uCase{1}{ 
      \If{$counted = false$}{ 
        beep with probability $\frac 1 k$ 
        \label{line:beep}
        \tcp*{contends}
      } 
    }
    \uCase{2}{
      \If{I am the only one that beeped in slot $1$}{
        \label{line:test}
        beep \tcp*{wins the contest}
        $counted \gets true$
      }
    }
    \uCase{3}{
      \If{ $counted = false$ }{
        beep \tcp*{will contend in the next phase}
      }
    }
\SetKwSwitch{Switch}{Case}{}{}{}{end of phase:}{}{}{}%
    \uCase{}{
      \If{a neighbor beeped in slot 2}{
        $size \gets size + 1$
        \label{line:count}
      }
      \If{no one beeped in slot 3}{
        $terminate \gets true$
        \label{line:termination}
      }
      \eIf {no one beeped in slot 1}{
        \label{line:proba1}
        \If{$(k>2)$}{
          $k \gets k-1$
        }
      }{\If{there was a collision in slot 1}{
        $k \gets k+1$
        \label{line:proba2}
      }
      }
    }
  }
}
\SetKwSwitch{Switch}{Case}{Other}{switch}{do}{case}{otherwise}{}{}
\SetKwRepeat{Repeat}{repeat}{until}%
\end{algorithm}

\newpage
\subsubsection{Analysis of the Algorithm:}
For the sake of analysis, it is easier to think in terms of phases rather than slots. The algorithm progresses every time a node beeps alone in the first slot of a phase. The probability that this happens in a phase depends upon the value of $k$ and the number of nodes $n'$ still contending in this phase. Intuitively, if $k$ is too large, then it is likely that no node will beep in the first slot, while if $k$ is too small, they will likely be several to do so. In fact, the probability of success is maximum when $k=n'$, which is what the algorithm attempts to maintain. The following fact is important:

\begin{fact}
Uncounted nodes all have the same $k$.
\end{fact}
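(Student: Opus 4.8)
The plan is to argue by induction on the phase number that at the start of every phase, all uncounted nodes hold the same value of $k$. First I would establish the base case: before the first phase, every node initializes $k \gets 2$ (line~3 of Algorithm~\ref{algo:size}), so trivially all uncounted nodes share the same $k$. For the inductive step, I would assume that at the start of some phase all uncounted nodes have a common value $k$, and show that the same holds at the start of the next phase — both for the nodes that remain uncounted and (vacuously) for any new set of uncounted nodes, which is just a subset of the old one.

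The key observation is that the update to $k$ performed in the end-of-phase {\tt case} (lines~\ref{line:proba1}--\ref{line:proba2}) depends only on \emph{globally observable} events: whether no node beeped in slot~1, or whether a collision occurred in slot~1. In the $B_{cd}L$ model, a listening node hears ``at least one beep'' and a beeping node additionally learns whether it beeped alone or collided; crucially, every uncounted node is in exactly one of these two roles during slot~1, and from its own observation together with the fact that it is uncounted, it can determine which of the three mutually exclusive outcomes occurred — silence, a lone beep, or a collision. (If a node beeped, it knows directly via collision detection; if it listened, it heard silence or a beep, and since it was eligible to beep but chose not to, ``a beep'' means some \emph{other} node beeped, so it can still tell silence from non-silence, and if exactly one node beeped that node re-beeps in slot~2, letting listeners distinguish the lone-beep case from the collision case by whether slot~2 was silent.) Hence all uncounted nodes apply the same branch of the conditional, and since they started the phase with a common $k$, they end with a common $k$: either $k-1$ (if $k>2$), or $k$, or $k+1$, uniformly across all of them.

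To finish, I would note that the set of nodes still uncounted at the start of the next phase is a subset of those uncounted at the start of the current phase (a node only ever transitions from uncounted to counted, via line after~\ref{line:test}, never back), and every node in that subset underwent the identical update described above, so they all share the new common value of $k$. This completes the induction. The main subtlety — really the only thing requiring care — is verifying that in the $B_{cd}L$ model an uncounted node can always correctly classify the slot-1 outcome into one of the three cases; once the slot-2 re-beep mechanism is recognized as the tie-breaker that lets listeners separate ``exactly one beep'' from ``two or more beeps,'' the rest is a routine induction.
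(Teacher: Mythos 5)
Your proof is correct; the paper actually states this Fact without any proof, and your induction supplies exactly the justification the authors leave implicit. The two points you isolate — that all nodes initialise $k=2$, and that the end-of-phase update of $k$ branches only on events (silence in slot~1, lone beep, or collision) that every node can classify identically, using sender-side collision detection for beepers and the slot-2 re-beep as the tie-breaker for listeners — are precisely what makes the claim hold.
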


In the analysis below, we distinguish between the case that $k$ is within ``good bounds'' and the case that it is not. We call a {\em bad} phase one in which $k$ drifts out of these bounds or keeps drifting away (if it is already out). More formally, let $I= [n',3n']$. A phase $\phi$ is a \emph{bad phase} if:
\begin{itemize}
\item[\emph{(i)}] at the start of $\phi$, $k \le n'$ or $k \ge 3n'$, and
\item[\emph{(ii)}] the phase moves $k$ farther away from $I$ (or $k$ remains unchanged at 2).
\end{itemize}
We prove the following lemma:
\begin{lemma}
The probability of a phase to be a bad one is upper bounded by $0.4$.
\end{lemma}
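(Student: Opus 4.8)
The plan is to condition on the configuration at the start of phase $\phi$ and show that, whenever condition \emph{(i)} holds, the conditional probability that \emph{(ii)} also holds is at most $0.4$; since this bound is uniform over all configurations satisfying \emph{(i)}, and a phase can be bad only when \emph{(i)} holds, the overall probability of a bad phase is then at most $0.4$. As noted above, all $n'$ uncounted nodes share the same value $k$, and each beeps independently with probability $1/k$ in slot~$1$; hence the only randomness relevant to a phase is the number of beeps in slot~$1$, which is binomially distributed with parameters $n'$ and $1/k$. The outcomes ``$0$ beeps'', ``$1$ beep'', and ``$\geq 2$ beeps'' respectively make $k$ decrease (or stay at $2$), make $k$ stay put while one more node is counted, and make $k$ increase. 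Condition \emph{(i)} splits into the two cases $k\leq n'$ and $k\geq 3n'$, which are disjoint because $n'\geq 1$, and I will treat each separately, identifying in each case exactly which event realises \emph{(ii)}.

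\emph{Case $k\leq n'$.} Then $k$ lies at or below the left endpoint of $I$, so \emph{(ii)} occurs exactly when $k$ would move downwards, that is, when no node beeps in slot~$1$ (which is also the event covered by the ``$k$ remains unchanged at $2$'' clause, corresponding to its sub-case $k=2$). This event has probability $(1-1/k)^{n'}$, a quantity increasing in $k$, hence at most $(1-1/n')^{n'}$; using $\ln(1-x)\leq -x$ this is at most $e^{-1}<0.4$.

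\emph{Case $k\geq 3n'$.} Then $k$ lies at or above the right endpoint of $I$, so \emph{(ii)} occurs exactly when $k$ moves upwards, that is, when there is a collision ($\geq 2$ beeps) in slot~$1$. Writing $p=1/k\leq 1/(3n')$, the complementary event (zero or one beep) has probability
\[
  (1-p)^{n'}+n'p(1-p)^{n'-1}=(1-p)^{n'-1}\bigl(1+(n'-1)p\bigr)\geq (1-p)^{n'-1}.
\]
Now $\ln(1-p)\geq -p/(1-p)$, and since $(n'-1)p\leq n'p\leq 1/3$ and $1-p\geq 2/3$ we get $(1-p)^{n'-1}\geq e^{-1/2}>0.6$, so the collision probability is strictly below $0.4$. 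Combining the two cases finishes the argument.

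The two binomial estimates are routine; the step requiring actual care is the $k\geq 3n'$ case, which is the tighter one and depends on the numerical inequality $e^{-1/2}>3/5$ together with the chain of bounds that caps $(n'-1)p/(1-p)$ at $1/2$. It is also prudent to verify the degenerate instances ($n'=1$, or $k=2$) directly, where some of the binomial terms vanish but the claimed inequalities still hold trivially.
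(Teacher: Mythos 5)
Your proof is correct and follows the same overall structure as the paper's: condition on the two sub-cases of \emph{(i)} and bound the relevant binomial event in each. The first case ($k\leq n'$) is handled identically — the no-beep probability $(1-1/k)^{n'}$ is maximised at $k=n'$ and bounded by $e^{-1}<0.4$. The only real divergence is in the $k\geq 3n'$ case. The paper dispatches it with a one-line domination argument: the event ``at least two beeps'' is contained in the event ``at least one beep'', whose probability $1-(1-1/k)^{n'}$ is at most $1/3$ when $k\geq 3n'$. You instead compute the collision probability exactly as $1-(1-p)^{n'-1}\bigl(1+(n'-1)p\bigr)$ and estimate it via $\ln(1-p)\geq -p/(1-p)$, obtaining the sharper bound $1-e^{-1/2}<0.4$. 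Your route costs a little more calculation but yields a tighter constant (about $0.393$ versus the paper's $1/3$ for that case — though the paper's bound there is actually the smaller one; the binding case for the constant $0.4$ is the $e^{-1}$ from the first case in both proofs, so neither version improves the lemma's statement). The paper's shortcut is worth internalising as the more economical argument, but nothing in your version is wrong, and your attention to the degenerate instances $n'=1$ and $k=2$ is a point the paper glosses over.
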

\begin{proof}
Let $\phi$ be a phase. Assume that $k$ verifies the conditions in
\emph{(i)}, i.e., $k\leq n'$ or $k\geq 3n'$. To prove the lemma, we
study both cases separately.
\begin{itemize}
\item $k\leq n'$, then $k$ will decrease iff no node
  beeps. This happens with probability $p = \left(1 - \frac 1
  k\right)^{n'}$, which is maximum when $k=n'$ and always less than $e^{-1}$, which is less that $0.4$.
\item $k\geq 3 n'$, then $k$ will increase iff at least two nodes beep. This is a particular case of (thus is less likely than) having at least {\em one} node beep, which happens with probability $p = 1-\left(1-\frac{1}{k}\right)^{n'}$. This formula is maximum when $k=3n'$ and it is always less than $\frac{1}{3}$, which is less than $0.4$.\qed
\end{itemize}
\end{proof}
We also have the following lemma:
\begin{lemma}
Let $\phi$ be a phase. If $k\in I$ at the start of $\phi$, then with
probability at least $1/5$, a node will be counted.
\end{lemma}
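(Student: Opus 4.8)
The plan is to turn the statement into an explicit one-variable inequality. First I would pin down exactly when a node gets counted during phase $\phi$: this happens precisely when exactly one of the uncounted nodes beeps in slot $1$. Indeed, such a node detects (via internal collision detection) that it beeped alone, re-beeps in slot $2$, and sets $counted$ to true; whereas if zero or at least two uncounted nodes beep in slot $1$, nobody is counted during $\phi$. By the Fact stated just above, all $n'$ uncounted nodes share the same value $k$ and beep independently with probability $1/k$, so the probability that a node gets counted is exactly $P := \binom{n'}{1}\frac{1}{k}\bigl(1-\frac{1}{k}\bigr)^{n'-1} = \frac{n'}{k}\bigl(1-\frac{1}{k}\bigr)^{n'-1}$.

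Next I would substitute $x := 1/k$, so that the hypothesis $k \in I = [n',3n']$ reads $x \in [\frac{1}{3n'},\frac{1}{n'}]$, and analyse $f(x) := n'x(1-x)^{n'-1}$ on that interval. A one-line derivative computation shows that $f'(x)$ has the sign of $1-n'x$, so $f$ is nondecreasing on $[0,1/n']$; hence on $[\frac{1}{3n'},\frac{1}{n'}]$ it is minimised at $x = \frac{1}{3n'}$, i.e. $P \ge f\bigl(\tfrac{1}{3n'}\bigr) = \frac{1}{3}\bigl(1-\frac{1}{3n'}\bigr)^{n'-1}$. (The best case is the opposite endpoint $k = n'$, where $f$ is maximal, so no separate argument is needed there.)

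Finally I would lower-bound $\bigl(1-\frac{1}{3n'}\bigr)^{n'-1}$ by a constant: using $\ln(1-y) \ge -\frac{y}{1-y}$ with $y = \frac{1}{3n'}$ gives $(n'-1)\ln\bigl(1-\frac{1}{3n'}\bigr) \ge -\frac{n'-1}{3n'-1} \ge -\frac{1}{3}$, so $\bigl(1-\frac{1}{3n'}\bigr)^{n'-1} \ge e^{-1/3} > 0.6$ and therefore $P \ge \frac{1}{3}e^{-1/3} \approx 0.238 > \frac{1}{5}$. Alternatively, since $(1-\frac{1}{3n'})^{n'-1}$ is monotone in $n'$, one can verify a couple of small cases by hand and pass to the limit for the rest.

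The only genuinely delicate step is the first one: being careful that ``a node will be counted'' is exactly the event ``exactly one uncounted node beeps in slot $1$'' — nothing weaker and nothing stronger — and invoking the Fact so that the $n'$ Bernoulli trials are i.i.d.\ with the common parameter $1/k$. Once that is in place the rest is elementary: the bound is entirely driven by the worst case $k = 3n'$, and even a crude estimate on $(1-\frac{1}{3n'})^{n'-1}$ leaves a comfortable margin above $1/5$.
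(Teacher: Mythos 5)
Your proposal is correct and follows essentially the same route as the paper: identify the success event as ``exactly one uncounted node beeps in slot $1$'', write its probability as $\frac{n'}{k}\left(1-\frac{1}{k}\right)^{n'-1}$, observe monotonicity in $k$ on $[n',3n']$ so the worst case is $k=3n'$, and bound the result below by $\frac{1}{3}e^{-1/3}\approx 0.2388 > 1/5$. You merely supply the details (the derivative computation and the estimate on $\left(1-\frac{1}{3n'}\right)^{n'-1}$) that the paper leaves implicit.
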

\begin{proof}
The probability that exactly one node beeps (and thus is
counted) in phase $\phi$ is given by 
$p = \frac {n'} k\left(1-\frac 1
k\right)^{n'-1}.$ This is a decreasing function of $k$ for $k>n'$ and
thus it is lower bounded when taking $k=3n'$. This gives a lower bound
of $0.2388$, which ends the proof.
\qed
\end{proof}
We now give the main result of this section:
\begin{theorem}
Let $G$ be a one-hop network of size $n$. The execution of Algorithm~\ref{algo:size} in $G$ ends in at most $55n$ phases ($165n$ slots), with
high probability.
\end{theorem}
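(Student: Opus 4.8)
The plan is to reduce the statement to a bound on the number of \emph{phases}, and then to control that number by a potential-function (submartingale) argument. First note that at most one node is counted in a phase (only the unique slot-$1$ beeper, if there is one, sets $counted$), and that termination is detected exactly in the phase at the end of which no uncounted node remains; hence the algorithm runs precisely until $n$ \emph{successful} phases have occurred, where a phase is successful if exactly one uncounted node beeps in slot~$1$. So it suffices to prove that $n$ successful phases occur within $55n$ phases w.h.p. Let $c_t$ be the number of counted nodes at the start of phase $t$, $n'_t=n-c_t$, and $k_t$ the common value of $k$ among the uncounted nodes (well defined by the Fact); set $I_t=[n'_t,3n'_t]$, let $d_t=\max(0,\,n'_t-k_t,\,k_t-3n'_t)$ be the distance of $k_t$ from $I_t$, and define the potential $\Phi_t=12\,c_t-d_t$. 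Then: $\Phi_t\le 12c_t$, so whenever the algorithm has not yet terminated ($c_t<n$) we have $\Phi_t\le 12(n-1)$; moreover $\Phi_0=2-n$; a successful phase raises $c$ by $1$ and changes $d$ by at most $+3$ (it shrinks $[n',3n']$ by $3$), so $\Delta\Phi\in[9,13]$; and a non-successful phase fixes $c$ and $n'$ and moves $k$ by at most $1$, so $|\Delta\Phi|\le 1$. In particular $|\Phi_{t+1}-\Phi_t|\le 13$ in every phase.

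The core of the argument is a \emph{uniform positive drift}: $\mathbb E[\Phi_{t+1}-\Phi_t\mid\mathcal F_t]\ge c_0$ with $c_0=1-2e^{-1}$, whatever the state $(c_t,k_t)$. I would prove this in three cases. If $k_t\in I_t$, then by the lemma guaranteeing a count with probability $\ge1/5$ when $k\in I$, a successful (hence $(\ge\!9)$) move happens with probability $\ge1/5$, while any other outcome gives $\Delta\Phi\ge-1$, so the drift is at least $\tfrac15\cdot9-\tfrac45=1$. If $k_t<n'_t$, then $\Pr[\text{no slot-$1$ beep}]=(1-1/k_t)^{n'_t}<e^{-1}$, and in that event $\Delta\Phi\ge-1$; in the complementary event there is at least one slot-$1$ beep, which is either a success ($\Delta\Phi\ge9$) or a collision that pushes $k$ one step up toward $I_t$ ($\Delta\Phi=+1$), so the drift is at least $(1-e^{-1})-e^{-1}=1-2e^{-1}$. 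If $k_t>3n'_t$, then $\Pr[\text{no slot-$1$ beep}]\ge1-n'_t/k_t>\tfrac23$, and in that event $k$ moves one step down toward $I_t$ ($\Delta\Phi=+1$); otherwise $\Delta\Phi\ge-1$ (a collision gives $-1$, a success gives $9$), so the drift is at least $\tfrac23-\tfrac13=\tfrac13$. The two probability estimates here are exactly the computations already made in the proof of the bad-phase lemma, and $c_0=1-2e^{-1}\approx0.26$ is the smallest of the three lower bounds.

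To conclude, stop the chain at the termination time $T$ and set $M_t=\Phi_{t\wedge T}-c_0\,(t\wedge T)$; by the drift bound $(M_t)$ is a submartingale, and its increments are bounded by $13+c_0<14$ in absolute value. If the algorithm has not terminated by phase $55n$, then $c_{55n}<n$, so $\Phi_{55n}\le12(n-1)$ and therefore $M_{55n}\le M_0-\lambda$ with $\lambda=(55c_0-13)n+O(1)=\Omega(n)$, using $55c_0\approx14.5>13$. Azuma--Hoeffding then gives $\Pr[T>55n]\le\exp(-\lambda^2/(2\cdot55n\cdot14^2))=\exp(-\Omega(n))=o(n^{-1})$. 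Since each phase consists of three slots, this bounds the running time by $165n$ slots w.h.p.

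The step I expect to be the main obstacle is establishing the positive drift \emph{uniformly}, in particular when $k$ has wandered far outside $I$, while at the same time accounting correctly for the twofold effect of a successful phase --- it advances the count but can simultaneously eject $k$ from the freshly shrunk interval $[n'-1,3(n'-1)]$. This is precisely what fixes the weight on $c$ in $\Phi$: it must exceed the largest possible jump of $d$ (namely $3$) by enough margin to swamp the worst-case $-1$ contributions, yet it must leave the per-phase increments $O(1)$ so that the Azuma bound still yields an $\exp(-\Omega(n))$ tail. Everything else --- verifying the three case estimates and handling the stopped process --- is routine given the two lemmas preceding the theorem.
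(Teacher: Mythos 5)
Your proposal is correct, and it reaches the stated $55n$ bound by a genuinely different route from the paper. The paper's proof works by classifying phases: it invokes the fact $T_I \geq T - 4n - 2B_T$ relating the number of in-interval phases to the number of bad phases, argues that $B_T$ is stochastically dominated by a $BIN(T,0.4)$ variable, applies a Chernoff bound to get $B_T \le 23n$ w.h.p.\ for $T=55n$, deduces $T_I \ge 5n$, and then applies ``the same arguments'' once more to the $\ge 1/5$ success probability of in-interval phases to conclude that $n$ nodes are counted. Your proof instead folds the count $c_t$ and the distance $d_t$ of $k$ from $I$ into a single potential $\Phi_t = 12c_t - d_t$, establishes a uniform per-phase drift of $c_0 = 1-2e^{-1}$ over all three regimes of $k$ (using exactly the probability estimates $e^{-1}$, $1/3$ and $1/5$ that underlie the paper's two lemmas), and closes with one application of Azuma--Hoeffding to the stopped submartingale. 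I checked the key numerics: $\Delta\Phi \in [9,13]$ on a success (the weight $12$ indeed dominates the worst-case jump $+3$ of $d$ when the interval shrinks), $|\Delta\Phi|\le 1$ otherwise, $\Phi_0 = 2-n$, and $55c_0 \approx 14.53 > 13$, so the deviation $\lambda = \Omega(n)$ and the tail is $\exp(-\Omega(n)) = o(n^{-1})$. What your version buys is rigour on the dependence structure: the paper's binomial domination of $B_T$ and its reuse of the Chernoff argument on the (adaptively determined, random) set of in-interval phases both implicitly need a martingale or stochastic-domination justification, which your single supermartingale step supplies cleanly; the cost is the design of the potential and a slightly longer case analysis. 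The two proofs are otherwise built on the same two lemmas and yield the same constants.
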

\begin{proof}
We denote by $T$ the number of phases, $T_I$ the
number of phases where $k$ is in the interval $I$, and by $B_T$ the
number of bad phases in $T$. We have the following fact:
\begin{fact}
\label{fact:TI}
\hspace{-5pt}{\bf .}\hspace{5pt}
$T_I \geq T - 4n - 2 \times B_T$, where $4n$ is an upper bound on the number of ``external'' phases, that is, $n-2$ initial phases from $k=2$ to $k=n$ and $3n-3$ phases where $k$ decreases from $3n$ to $3$. The factor $2$ for $B_T$ corresponds to the fact that bad phases have symmetrical ``good phases'' where $k$ comes closer to the good interval (and their number is at most $B_T$).
\end{fact}

Note that $B_T$ is dominated by a binomial random variable $BIN(T,0.4)$
with parameters $T$ and $0.4$. Then, taking $T=55n$, one can use the
Chernoff bound to obtain:
\begin{eqnarray}
{\mathbb P}r\left( B_{T} \geq 23n\right) & = & {\mathbb P}r\left( B_{T} - 0.4\times T \geq n\right)\\\nonumber
& \leq & 2\times e^{-n^2/3\times0.4\times T} = 2\times e^{-n/66}.\nonumber
\end{eqnarray}
This shows that if $T=55n$ then, w.h.p., the number of bad phases is
upper bounded by $23n$.  Thus, using Fact~\ref{fact:TI}, we obtain
that, w.h.p., the number of phases inside interval $I$ is
at least $5n$. Since each phase inside $I$ counts a node with
probability at least $1/5$, we can use the same arguments to prove
that, w.h.p., the number of nodes counted during $T=55n$ phases is
at least $n$. Which ends the proof.\qed
\end{proof}
From Lemma~\ref{lowerbound}:
\begin{corollary}
Algorithm~\ref{algo:size} is optimal up to a
constant factor.
\end{corollary}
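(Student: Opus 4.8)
The plan is to chain together the lower bound from Lemma~\ref{lowerbound} with the upper bound from the Theorem just proved, and observe that they match up to a multiplicative constant. Concretely, Lemma~\ref{lowerbound} says that \emph{any} Las Vegas beeping counting algorithm needs at least $n$ slots on $K_n$, so $n$ is a universal lower bound on the slot complexity of the counting problem in this setting. The Theorem says that Algorithm~\ref{algo:size} terminates within $165n$ slots with high probability, hence its slot complexity is $O(n)$. Putting $\Omega(n)$ and $O(n)$ together, Algorithm~\ref{algo:size} runs within a constant factor (namely $165$, or whatever constant the high-probability bound yields, and less than $10$ experimentally) of the best possible, which is precisely the assertion of optimality up to a constant factor.

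First I would state the lower bound instance: fix $n$ and the complete graph $K_n$; by Lemma~\ref{lowerbound} no Las Vegas algorithm can finish in fewer than $n$ slots on this input, so for every Las Vegas counting algorithm $\mathcal{B}$ its worst-case (indeed, every-execution) slot count on $K_n$ is at least $n$. Next I would invoke the Theorem for the specific algorithm $\mathcal{A}$ of this section: on $K_n$, $\mathcal{A}$ halts in at most $165n$ slots w.h.p., so its slot complexity is at most $165n = O(n)$. Then the ratio of $\mathcal{A}$'s complexity to the lower bound is at most $165n / n = 165$, a constant independent of $n$; hence $\mathcal{A}$ is optimal up to the constant factor $165$ (and the simulations quoted earlier suggest the true constant is below $10$).

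There is essentially no technical obstacle here — the corollary is a one-line consequence of two results already established. The only point that deserves a word of care is that the comparison is between quantities of the same type: Lemma~\ref{lowerbound} is a lower bound on \emph{slot} complexity of \emph{Las Vegas} algorithms, and the Theorem bounds the \emph{slot} complexity of Algorithm~\ref{algo:size}, which is itself a Las Vegas algorithm (it always outputs the correct size and terminates with probability one), so the two bounds genuinely apply to the same measure on the same class of algorithms. Once that is noted, the corollary follows immediately.
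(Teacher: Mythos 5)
Your proposal is correct and follows exactly the paper's (implicit) argument: the corollary is stated as a direct consequence of the lower bound in Lemma~\ref{lowerbound} combined with the $O(n)$ upper bound of the preceding theorem. Nothing further is needed.
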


\subsubsection{Experimental Results:}
The algorithm was implemented and run on complete graphs with size $n$ ranging over all powers of $2$ from $8$ to $512$ (inclusive). For each value of $n$, we performed $10000$ runs and measured the number of phases before the algorithm terminates.\footnote{Source code available upon request.} A linear regression with Gnuplot gives us an average number of phases of $3.3197n$ (with very small regression error of $\pm 0.03074\%$). Since there are $3$ slots per phases, the expected slot complexity in practice is less than $10n$.

\section{Adaptation of the algorithm in other beeping models}
\label{sec:adaptation}

This section explores the possible adaptations of Algorithm~\ref{algo:size} in other variants of the beeping model, namely in $B_{cd}L_{cd}$ (the strongest), in $BL$ (the weakest), and in $BL_{cd}$. We show that the extra power available in $B_{cd}L_{cd}$ allows us to reduce the slot complexity by a constant factor. Adaptation to $BL$ comes at the price of sacrifying certainty (the adaptation is a Monte Carlo algorithm), it consists in emulating $B_{cd}$ beeps using techniques from~\cite{CMRZ16}. Finally, we briefly discuss the case of $BL_{cd}$, in which naive adaptations do not seem to work.

\subsection{Adaptation of the algorithm in $B_{cd}L_{cd}$}
In the $B_{cd}L_{cd}$ variant, both beeping and listening nodes can detect if several beeps occur simultaneously. In our algorithm, the purpose of the second slot is for a node to inform others that it was the only one to beep. This step is no longer necessary since listening nodes can detect it from the first slot. As a result, the phases of the algorithm can be simplified from $3$ to $2$ slots and the expected running time of the algorithm decreases by one third. (According to our experimentations, this would thus decrease the expected number of slots from less than $10n$ to less than $6.67n$.)

\subsection{Adaptation of the algorithm in $BL$}
\label{emuler}

\newcommand{\emulate}{\ensuremath{{\tt EmulateB_{cd}inBL}}\xspace}

As discussed in Section~\ref{sec:definitions}, no algorithm can solve
the counting problem with guaranteed result in $BL$, whether it be deterministic or Las Vegas. However, we can adapt it into a Monte Carlo algorithm
(uncertain result) using emulation techniques from~\cite{CMRZ16}. We
describe this adaptation here and analyse the resulting uncertainty.
We show that this uncertainty can be bounded by any constant threshold
if the nodes know an upper bound on the size of the network. Otherwise, we must settle for a certainty bound that depends on the size of the network (which is unknown).

The adaptation consists in replacing the {\tt beep} instruction in
Line~\ref{line:beep} of Algorithm~\ref{algo:size} by a call to the
procedure \emulate. The other {\tt beep} instructions in
Algorithm~\ref{algo:size} need not be replaced. Procedure \emulate
emulates a $B_{cd}$ beep (beep with collision detection) by means of
several basic $B$ beeps. It is itself organised into phases of
two slots. In each of these phases, the node beeps in one of the two
slots, chosen uniformly at random, and it listens in the other. The
intuitive idea is that if several nodes are to execute this procedure
simultaneously, then they will eventually beep in different slots and
thus detect it. Unfortunately, so long as the number of phases is
finite, there is a non-zero probability that these nodes always
beep simultaneously in the procedure, which is why the resulting
algorithm is Monte Carlo.

The detail of the procedure is given in Algorithm~\ref{algo:emulateBcd}. 
For technical reasons, we have the nodes generate their random bits (called signature) beforehand and use the same every time the emulation procedure is called. Besides using less random bits, this improves the probability of success by avoiding the extra union bound due to repeating a successful emulation several times. Let $r$ be the number of round in each emulation (it value is discussed later on), we denote by $s$ the signature of a node, that is, the word formed by $r$ bits generated uniformly at random.


\begin{algorithm}[!h]
    \label{algo:emulateBcd}
    \caption{A Procedure to emulate a $B_{cd}$  in  the $BL$ model.}  
    \textbf{Procedure }$EmulateB_{cd}inBL$(\textbf{IN:}{$s:$ word of bits associated to the vertex}\textbf{;} \textbf{OUT: }{$collision: boolean$})\\
    $Boolean\ collision \gets  false;$
    
    $Integer\ i \gets  0;$
    -
    \Repeat{$i =  r$ }{
      \Switch{slot}{ 
        \uCase{1}{ \lIf{$s[i]=0$}{beep}\lElse{listen}
        }
        \uCase{2}{\lIf{$s[i]=1$}{beep}\lElse{listen} 
        } 
        \SetKwSwitch{Switch}{Case}{}{}{}{end of phase:}{}{}{}%
        \uCase{}{
          \If{someone beeped in a different slot than I did}{$collision \gets  true$}
        }
      }
      $i \gets  i+1$
    }
    
    \textbf{End Procedure}
 \end{algorithm}


    
    
           
      
      
      
    

\subsubsection{Analysis of the Algorithm:}

We say that the procedure is {\em correct} if it detects when several nodes execute it simultaneously. The probability that it fails is maximum when two nodes only execute it, which corresponds to $1/2^r$. Hence, the probability that the procedure is not correct is upper bounded by $1/2^r$. The results in Lemma~\ref{lem:emule} are obtained by replacing $r$ with different values.

\begin{lemma}\label{lem:emule}
For any $\varepsilon>0$ and for any $n>0$:
\begin{enumerate} 
\item if $r=\lceil \log\left(\frac n\varepsilon\right) \rceil$, then the procedure is correct with probability at least $1-\frac{\varepsilon}{n} > 1-\varepsilon$, 
\item if $r=\lceil \log\left(\frac 1 \varepsilon\right) \rceil$, then, for any node $v$, the procedure
is correct on $v$ 
with probability at least $1-\varepsilon$, 
\item if $r = \lceil 2\log(n)\rceil$,  then, the procedure is correct w.h.p.
\end{enumerate}
\end{lemma}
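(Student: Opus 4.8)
The plan is to reduce all three statements to a single estimate — that one invocation of \emulate by a set $S$ of at least two nodes fails to report a collision with probability at most $2^{-r}$ — and then to obtain items (1)--(3) by substituting the prescribed value of $r$ and simplifying. So the first and main step is to pin down exactly when the procedure fails. I would argue as follows: fix a phase $i$ of the procedure and a node $v\in S$. By construction $v$ beeps in the slot indexed by $s[i]$ and listens in the other one, so $v$ hears a beep in phase $i$ (and hence sets $collision$ to \emph{true}) if and only if some $w\in S$ has $s_w[i]\neq s_v[i]$. Consequently $v$ \emph{fails} to detect the collision over the whole run of the procedure if and only if $s_w=s_v$ for every $w\in S$, and the procedure is globally incorrect on $S$ (some node fails) if and only if all signatures in $S$ are equal. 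Since the $r$-bit signatures are drawn independently and uniformly, the probability that all $|S|$ of them coincide is $2^{-r(|S|-1)}$, a decreasing function of $|S|$, hence maximised over $|S|\ge 2$ at $|S|=2$, where it equals $2^{-r}$. (When $|S|\le 1$ there is no collision to report and the procedure is trivially correct, returning \emph{false}.) This establishes the bound $2^{-r}$ on the failure probability, both in the ``some node of $S$ fails'' sense used in items (1) and (3) and in the ``node $v$ fails'' sense used in item (2).

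It then remains to feed in the three choices of $r$. For (3), $r=\lceil 2\log n\rceil$ gives $2^r\ge n^2$, so the failure probability is at most $n^{-2}=o(n^{-1})$, i.e. the procedure is correct w.h.p. For (1), $r=\lceil\log(n/\varepsilon)\rceil$ gives $2^r\ge n/\varepsilon$, hence a failure probability of at most $\varepsilon/n$, and since $n\ge 2$ in any nontrivial instance this is strictly below $\varepsilon$; the slack factor $n$ is precisely what will later permit a union bound over the at most $n$ emulations that actually matter (those in which a node gets counted) without losing the $\varepsilon$ guarantee. For (2), $r=\lceil\log(1/\varepsilon)\rceil$ gives $2^r\ge 1/\varepsilon$ and hence a per-node failure probability of at most $\varepsilon$; here $r$ does not grow with $n$, which is why the guarantee can only be stated node by node rather than for the whole network at once.

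A point worth making explicit, and the place where one has to be slightly careful, is the effect of the ``shared signature'' design: each node uses the \emph{same} $r$ bits on every call to \emulate. Hence the events ``the $j$-th invocation on a fixed colliding set $S$ is correct'' are not independent repetitions but literally the same event, so the bound $2^{-r}$ above is simultaneously a bound for all invocations restricted to $S$, and no union bound over the (a priori unbounded) number of invocations is incurred — exactly the improvement the signature trick buys over drawing fresh bits each time.

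The main obstacle I anticipate is not any computation — the arithmetic in the second paragraph is immediate — but getting the failure event characterised correctly and stated at the right granularity: ensuring that (a) a listening node detects a collision exactly when a differing signature bit forces a beep into its listening slot, (b) enlarging $S$ only makes an undetected collision less likely, so the worst case is genuinely $|S|=2$, and (c) the notions ``the procedure is correct'' and ``the procedure is correct on $v$'' are distinguished cleanly, along with the role the factor $n$ plays in (1) versus its absence in (2). Once these are settled, items (1)--(3) follow by substitution.
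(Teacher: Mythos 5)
Your proof is correct and follows essentially the same route as the paper, which simply asserts that the failure probability is maximised at two participating nodes, equals $1/2^r$ there, and then substitutes the three values of $r$. Your write-up merely makes explicit what the paper leaves implicit (the characterisation of failure as ``all signatures in $S$ coincide'', the probability $2^{-r(|S|-1)}$, and the role of the shared signature in avoiding a union bound over invocations), so there is nothing to correct.
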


Since $n$ is not known a priori, only the second result can be used effectively by the algorithm.
From Lemma~\ref{lem:emule} we obtain:
\begin{theorem}
For any graph $G$ of size $n$ and any $0<\varepsilon<1$:
 If $r=\lceil \log\left(\frac 1 \varepsilon\right) \rceil$,
  each node $v$ computes the size of the network in 
$O\left(n\log(\frac1{\varepsilon})\right)$ slots, 
and the result is correct with probability at least  $1-\varepsilon$.
\end{theorem}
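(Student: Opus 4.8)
The plan is to combine the linear time bound for the Las Vegas algorithm in $B_{cd}L$ (the Theorem stating termination in $O(n)$ phases w.h.p.) with the per-call cost and failure probability of the emulation procedure \emulate, as quantified in Lemma~\ref{lem:emule}(2). First I would observe that replacing the single \texttt{beep} instruction at Line~\ref{line:beep} of Algorithm~\ref{algo:size} by a call to \emulate turns each of the three-slot phases into a bounded-length phase: the emulated beep costs $r$ emulation-phases of $2$ slots each, i.e. $2r$ slots, plus the $O(1)$ slots for the other (unchanged) beeps and end-of-phase bookkeeping. With $r=\lceil \log(1/\varepsilon)\rceil$ this is $O(\log(1/\varepsilon))$ slots per phase of the outer algorithm. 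Since the outer algorithm, when its $B_{cd}$ beeps behave correctly, terminates within $O(n)$ phases w.h.p. by the earlier Theorem, the total slot count is $O(n)\cdot O(\log(1/\varepsilon)) = O\!\left(n\log\frac1\varepsilon\right)$, giving the claimed time bound. Here I would stress that the outer algorithm runs for a fixed, deterministically bounded number of phases (we run it exactly the number of phases that suffices w.h.p., e.g. the $55n$ of the earlier Theorem, and then stop), so the resulting algorithm is Monte Carlo: time is certain, result may be wrong.

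Next I would bound the probability of an incorrect result. Two independent sources of error must be controlled. The first is the intrinsic failure probability of the outer Las Vegas algorithm not having terminated within the fixed phase budget; by the earlier Theorem this is $o(n^{-1})$, hence $o(1)$, and can be absorbed. The second, and the one the statement really cares about, is the emulation failing to detect a collision in some call where two or more nodes contend simultaneously. The key point — and the reason the signature trick in Algorithm~\ref{algo:emulateBcd} matters — is that each node fixes one signature $s$ of $r$ bits and reuses it across all calls, so from the point of view of a fixed node $v$ the emulation is "correct on $v$" (Lemma~\ref{lem:emule}(2)) with probability at least $1-\varepsilon$, where correctness on $v$ means: whenever $v$ beeps concurrently with some other contending node in the first slot, $v$ detects it. If the emulation is correct on $v$ for the node(s) that would otherwise be miscounted, then every collision the outer algorithm needs to see is seen, and $v$'s run of Algorithm~\ref{algo:size} is indistinguishable from a genuine $B_{cd}L$ run, hence produces the correct count. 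So the probability that $v$ outputs the wrong size is at most $\varepsilon$ plus the $o(1)$ outer-failure term; since $\varepsilon<1$ is a constant we can, if desired, rescale $r$ to $\lceil\log(1/\varepsilon')\rceil$ with $\varepsilon'$ slightly smaller to swallow the $o(1)$ and still state "at least $1-\varepsilon$", or simply note the bound holds for all sufficiently large $n$.

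The main obstacle I expect is making the union-bound argument honest: a priori the outer algorithm performs $\Theta(n)$ phases, each invoking \emulate once per uncounted node, so naively one would union-bound over $\Theta(n)$ or even $\Theta(n^2)$ emulation events and lose the constant certainty $1-\varepsilon$ entirely (this is exactly the "extra union bound due to repeating a successful emulation several times" the paper warns about). The resolution is precisely that a node reuses its fixed signature, so "the emulation is correct on $v$" is a single event of probability $\ge 1-\varepsilon$ that, once it holds, holds for every call $v$ participates in; one does not pay again per phase. I would therefore spell out carefully that "correct on $v$" is defined over the single random choice of $v$'s signature (and, symmetrically, the other contenders' signatures), and that this single good event suffices for $v$'s output to be correct. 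A secondary, minor obstacle is handling the slots where \emulate is invoked by several nodes with different $k$-values — but Fact~"Uncounted nodes all have the same $k$" guarantees all contenders are in lockstep, so the outer algorithm's control flow is unaffected and the only thing that can go wrong is an undetected simultaneous beep, which is exactly what Lemma~\ref{lem:emule}(2) controls. Once these points are in place, the time bound and the $1-\varepsilon$ correctness bound follow by assembling the pieces.
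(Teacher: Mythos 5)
Your assembly of the time bound is sound and matches the paper's (implicit) argument: $O(n)$ outer phases w.h.p., each costing $2r+O(1)=O(\log(1/\varepsilon))$ slots once the slot-1 beep is emulated, for $O(n\log(1/\varepsilon))$ slots in total. The gap is in the correctness half. You reduce ``$v$'s count is correct'' to ``the emulation is correct on $v$'', defined as: whenever $v$ beeps concurrently with another contender, $v$ detects it. This is not sufficient. If two \emph{other} nodes $u$ and $w$ contend alone in some phase and neither detects the collision (which happens exactly when $s_u=s_w$), then both beep in slot~2 and are merged into a single increment of \emph{every} node's counter, including $v$'s; so $v$'s output is corrupted by failures in which $v$ plays no part. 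To guarantee $v$'s output one needs the emulation to succeed for every contender set of every phase, i.e., no phase has a contender set of size at least $2$ whose signatures are all equal. The signature-reuse trick, as you correctly note, collapses the union bound over the $\Theta(n)$ calls; but it does not collapse the union bound over the $\Theta(n^2)$ \emph{pairs} of nodes, each of which shares a signature with probability $2^{-r}=\varepsilon$. Even your single event ``correct on $v$'' costs a union bound over the $n-1$ nodes that might share $v$'s signature, giving only $1-(n-1)\varepsilon$. The sentence ``if the emulation is correct on $v$ for the node(s) that would otherwise be miscounted'' is exactly where this is glossed over, and as written the pieces you assemble yield an error bound of order $n\varepsilon$, not $\varepsilon$.

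To be fair, the paper offers no explicit proof (the theorem is presented as an immediate consequence of Lemma~\ref{lem:emule}(2)), and the same localization issue is already latent in that lemma's statement of ``correct on $v$''; a clean $1-\varepsilon$ guarantee really calls for $r=\lceil\log(n/\varepsilon)\rceil$, which is case~(1) of the lemma and is what the subsequent proposition uses when an upper bound $N$ on $n$ is available. So your reconstruction faithfully follows the paper's intended route, and is more explicit than the paper on the time bound; but as a standalone argument it does not close the step from per-node, per-partner emulation correctness to a global $1-\varepsilon$ bound on the computed size.
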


Now, if the nodes know an upper bound $N$ on the size of the network, then the other two results from Lemma~\ref{lem:emule} can be used effectively by the algorithm as follows.
\begin{proposition}
For any graph $G$ of size $n$ and any $0<\varepsilon<1$:
\begin{itemize}
\item 
if $r=\lceil \log\left(\frac N \varepsilon\right) \rceil$,
the running time of the main algorithm is 
$O\left(n\log(\frac{n}{\varepsilon})\right)$, 
and the result is correct with probability at least  $1-\varepsilon$.
\item 
 if $r = \lceil 2\log(N) \rceil$,
the running time of the main algorithm is
$O\left(n\log n\right)$, and the result is correct 
with probability $1-o\left(\frac 1 n \right)$.
\end{itemize}
\end{proposition}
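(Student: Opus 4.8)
The plan is to read the Monte Carlo algorithm as Algorithm~\ref{algo:size} in which the single $B_{cd}$-beep of Line~\ref{line:beep} is performed by \emulate (Algorithm~\ref{algo:emulateBcd}), the beeps of slots~2 and~3 remaining ordinary $B$-beeps, and then to obtain the two items by plugging the two values of $r$ from Lemma~\ref{lem:emule} into one generic running-time bound and one generic failure-probability bound. Collision detection is used in Algorithm~\ref{algo:size} only to evaluate the test ``I am the only one that beeped in slot~1'' and the end-of-phase tests on slot~1; each of these is read off the $collision$ output of \emulate, so replacing Line~\ref{line:beep} alone indeed suffices. Since the nodes know $N$, both $r=\lceil\log(N/\varepsilon)\rceil$ and $r=\lceil 2\log N\rceil$ are node-computable, so the emulated algorithm is well defined.

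\paragraph{Running time.}
Inserting \emulate dilates each phase of Algorithm~\ref{algo:size} from $3$ slots to $2r+O(1)$ slots, the rest of a phase being untouched. As long as every emulation faithfully reproduces a $B_{cd}$-beep, the execution is, phase by phase, a genuine $B_{cd}L$ execution of Algorithm~\ref{algo:size}, so by the main theorem of Section~\ref{sec:one_hop} it uses $O(n)$ phases with high probability, hence $O(nr)$ slots w.h.p.; a faulty emulation only makes nodes counted sooner than they should be, so it cannot increase the number of phases. Plugging the two values of $r$ into $O(nr)$ gives the announced $O(n\log(n/\varepsilon))$ and $O(n\log n)$ (for the first of these using that the a priori bound $N$ is polynomially related to $n$, so that $\log N=\Theta(\log n)$).

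\paragraph{Correctness.}
Condition on the event $\mathcal G$ that no emulation ever reports ``alone'' while two or more nodes beep together. Because a node reuses a fixed signature, $\mathcal G$ holds exactly when, in every phase, the nodes that beep in that phase carry pairwise distinct signatures; and on $\mathcal G$ the emulated algorithm makes precisely the moves of Algorithm~\ref{algo:size} in $B_{cd}L$, which is Las Vegas and thus outputs the exact size. Hence the output is wrong only if $\mathcal G$ fails. To bound $\Pr[\overline{\mathcal G}]$ I would work on the high-probability event that the run consists of $O(n)$ phases, and for a fixed phase with contention probability $1/k$ bound the probability that at least two of the contending nodes beep and all beepers of that phase share a signature by $2^{-r}\cdot\Pr[\text{at least two beep}]\le 2^{-r}$; a union bound over the $O(n)$ phases gives $\Pr[\overline{\mathcal G}]=O(n\,2^{-r})$. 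With $r=\lceil\log(N/\varepsilon)\rceil$ we have $2^{-r}\le\varepsilon/N\le\varepsilon/n$, so the failure probability is $O(\varepsilon)$; with $r=\lceil 2\log N\rceil$ we have $2^{-r}\le 1/N^{2}\le 1/n^{2}$, so it is $O(1/n)$, which together with the w.h.p.\ phase bound yields the claimed $1-o(1/n)$.

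\paragraph{Main obstacle.}
The hard step is this last estimate: bringing the failure probability down to exactly $1-\varepsilon$, respectively $1-o(1/n)$, rather than the weaker $1-O(\varepsilon)$, respectively $1-O(1/n)$. A crude union bound over all $\binom{n}{2}$ pairs only gives $\Pr[\overline{\mathcal G}]\le\binom{n}{2}\,2^{-r}$, which is too weak for the second item (with $r=\lceil 2\log N\rceil$ it barely beats a constant); one has to exploit that two equal-signature nodes are harmless unless they actually beep together in some phase, and to carry the small co-beeping probabilities through the sum over the $O(n)$ phases while keeping the conditioning on the phase count consistent. Pinning down the constants there, and checking the boundary behaviour of Lemma~\ref{lem:emule} when $N$ is only slightly larger than $n$, is where the real work lies.
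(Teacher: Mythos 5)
The paper gives no explicit proof of this proposition: it is presented as an immediate consequence of Lemma~\ref{lem:emule} (single-emulation failure probability at most $2^{-r}$, applied with $N\ge n$ in place of $n$) together with the $O(n)$-phase bound of the main theorem, each phase now costing $2r+O(1)$ slots. Your reconstruction follows exactly that route for the running time, and your remark that the stated bounds $O(n\log(n/\varepsilon))$ and $O(n\log n)$ really read $O(n\log(N/\varepsilon))$ and $O(n\log N)$ unless $N$ is polynomially related to $n$ is a fair catch --- the paper silently identifies $\log N$ with $\log n$.

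On correctness your approach is again the paper's, but the ``main obstacle'' you flag is genuine and is in fact a gap in the paper rather than in your write-up. A union bound over the $O(n)$ phases gives only $1-O(\varepsilon)$ and $1-O(1/n)$, as you say. The paper's sharper constants rest entirely on the fixed-signature device, whose announced purpose is precisely to ``avoid the extra union bound due to repeating a successful emulation several times''; the paper never carries that argument out. For the first bullet one can recover $1-\varepsilon$ by a union bound over the $n$ \emph{counting events} (each node is marked counted exactly once, and that single decision is wrong with probability at most $2^{-r}\le\varepsilon/N\le\varepsilon/n$), modulo some care with the conditioning you describe. For the second bullet even this per-node bound yields only $O(n\cdot N^{-2})=O(1/n)$ when $N=\Theta(n)$, so the claimed $1-o(1/n)$ needs either $N=\omega(n)$ or the finer pair-plus-co-beeping analysis you sketch. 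In short: same approach as the paper, correctly executed up to constants, and the residual difficulty you identify is one the paper does not resolve either.
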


The latter is particularly relevant, as it implies that the correct result can be obtained with high probability. Furthermore, assuming that an upper bound is known is reasonable in practical scenarios. For instance, when a sensor network is deployed, one might want to learn afterhand how many sensors (among the initial amount) successfully started to operate. Similarly, when sensors use sleeping patterns (some are active, others sleep to save energy), this makes it possible to learn the number of active ones.

\subsection{Adaptation of the algorithm in $BL_{cd}$}

Now that we have seen how to adapt the algorithm in the strongest ($B_{cd}L_{cd}$) and the weakest ($BL$) beeping models, we turn our attention to the symmetrical $BL_{cd}$ model. In this variant, only listening nodes can detect simultaneous beeps. Algorithm~\ref{algo:size} can be adapted into this model. The basic idea is that instead of having a node inform the others when it detects that it beeped alone, we could do the reverse and have listeners inform the beeping node when it beeped alone. Hence, the first and last slot of each phase remain as in Algorithm~\ref{algo:size}. The second slot is split into two slots. In the first of these (new slot $2$), any node that was listening in slot $1$ beeps. This is to detect pathetic cases where {\em all} the nodes beep at the same time. In the second of these slots (new slot $3$), the listening node(s) beep if they detected a collision in slot~$1$. Based on this beep, if a node beeped in slot $1$, it detects that it was alone (other nodes also know it, since they were all listening in slot $1$). The only exception, which new slot $2$ can't avoid, is the special case that $n=1$ from the start. In this case, the node cannot decide whether it is alone to beep (and thus must count itself) or several nodes are beeping. This problem does not exist if $n \ge 2$, even when a single node remains to be counted in the end (since counted nodes keep on listening and participating until termination).

As for the slot complexity, the number of phases still depends solely on the drawings of slot $1$ and thus remains unchanged. The fact that every phase is now composed of $4$ slots makes the running time increase by one third.  (According to our experimentations, this would thus increase the expected number of slots from less than $10n$ to less than $13.34n$.)

\section{Concluding remarks}

In this paper, we presented an algorithm for counting the number of processes in one-hop beeping networks. Due to natural liminations of the beeping model, we observed that no algorithm can solve the problem with guaranteed result unless some form of collision detection is available (either for beeping nodes or for listening nodes). We presented an algorithm in the case that beeping nodes can detect collision ($B_{cd}L$), and prove that it runs in linear time in the number of nodes. We proved that this is optimal up to a constant factor and estimated the expected value of this factor experimentally. Interestingly, the algorithm can be adapted in $B_{cd}L_{cd}$ with a small gain in slot complexity ($1/3$ less slots) and in $BL_{cd}$ with a small overhead ($4/3$ more slots). Its adaptation in $BL$ is not as direct, but we proved that the algorithm can be emulated in this variant by a Monte Carlo algorithm which runs in a larger, though reasonable amount of time ($\log n$ more slots). Whether this is optimal for $BL$ is an open question.

\bibliographystyle{plain} 
\bibliography{biblio}
\end{document}